\newcolumntype{C}[1]{>{\centering\arraybackslash}p{#1}}
\def\endthebibliography{%
  \def\@noitemerr{\@latex@warning{Empty `thebibliography' environment}}%
  \endlist
}
\theoremstyle{definition}
\newtheorem{lemma}{Lemma}
\newcommand{\px}{p_{\mathrm{X}}}
\newcommand{\py}{p_{\mathrm{Y}}}
\newcommand{\pz}{p_{\mathrm{Z}}}
\newcommand{\PauliX}{\M{X}}
\newcommand{\PauliY}{\M{Y}}
\newcommand{\PauliZ}{\M{Z}}
\pgfplotsset{compat=1.17}
\begin{document}

\title{Quantum codes for asymmetric channels: \\ ZZZY surface codes}



\author{Diego Forlivesi,~\IEEEmembership{Graduate~Student~Member,~IEEE,}
Lorenzo~Valentini,~\IEEEmembership{Member,~IEEE,}
        and~Marco~Chiani,~\IEEEmembership{Fellow,~IEEE}
\thanks{The authors are with the Department of Electrical, Electronic, and Information Engineering ``Guglielmo Marconi'' and CNIT/WiLab, University of Bologna, 40136 Bologna, Italy. E-mail: \{diego.forlivesi2,lorenzo.valentini13, marco.chiani\}@unibo.it.
Work funded in part by the European Union - Next Generation EU, PNRR project PRIN n. 2022JES5S2. 
}
}

\maketitle 

\begin{acronym}
\small
\acro{AWGN}{additive white Gaussian noise}
\acro{BCH}{Bose–Chaudhuri–Hocquenghem}
\acro{CDF}{cumulative distribution function}
\acro{CRC}{cyclic redundancy code}
\acro{LDPC}{low-density parity-check}
\acro{ML}{maximum likelihood}
\acro{MWPM}{minimum weight perfect matching}
\acro{QECC}{quantum error correcting code}
\acro{PDF}{probability density function}
\acro{PMF}{probability mass function}
\acro{MPS}{matrix product state}
\acro{WEP}{weight enumerator polynomial}
\acro{WE}{weight enumerator}
\acro{BD}{bounded distance}
\acro{QLDPC}{quantum low density parity check}
\acro{CSS}{Calderbank, Steane, and Shor}

\end{acronym}
\setcounter{page}{1}

\begin{abstract}
We introduce surface ZZZY codes, a novel family of quantum error-correcting codes designed for asymmetric channels. Derived from standard surface codes through tailored modification of  generators, ZZZY codes can be decoded by the \ac{MWPM} algorithm with a suitable pre-processing phase. The resulting decoder exploits the information provided by the modified generators without introducing additional complexity. 
ZZZY codes demonstrate a significant performance advantage over surface codes when increasing the channel asymmetry, while maintaining the same correction capability over depolarizing channel. 
\end{abstract}

\begin{IEEEkeywords} Quantum Error Correction; Surface Codes; MWPM decoder; Asymmetric Quantum Channels.
\end{IEEEkeywords}

\section{Introduction}

The construction of a quantum computer presents a significant hurdle due to the presence of errors, which can quickly undermine quantum information integrity if not managed effectively. Consequently, error correction is crucial for ensuring the reliability of quantum computation \cite{Sho:95, Got:09, Pfi:23, ZorDePGio:23}. 
Surface codes play a pivotal role in the architecture of first-generation quantum computers, owing to their high error thresholds, planar structure, locality, and availability of efficient decoders \cite{KriSebLac:22, BluDolEve:23, AchRajAle:22, ForValChi:24STM}. The most widely used decoder for these codes is the Minimum Weight Perfect Matching (MWPM) decoder \cite{Higg:22, Bro:23}. 
Quantum channels are often modeled as memoryless and depolarizing, meaning that the three Pauli errors $\M{X}$, $\M{Y}$, and $\M{Z}$ are equally likely to occur. However, asymmetries in the error event probabilities can be present in real quantum devices, often due to different relaxation and dephasing times \cite{Eva07:XZAsymmetry, SarKlaRot:09}. 
To protect information flowing through asymmetric channels, one can use ad-hoc asymmetric quantum codes or \ac{CSS} codes constructed from two classical codes with different error correction capabilities, such as surface codes with rectangular lattices \cite{SarKlaRot:09, ChiVal:20a, ForValChi24:JSAC}. Another possibility is to modify the surface code generators to gain some asymmetric error correction capabilities. 
An example of this approach are the XZZX surface codes, designed to address scenarios where qubit dephasing is the primary noise source \cite{AtaTucBar:21}. 
In these codes, each ancilla measures according to $\M{X}$ in the horizontal direction and $\M{Z}$ in the vertical direction, leading to generators with both $\M{X}$ and $\M{Z}$ operators.
%
For XZZX codes, the primal and dual lattices can be decoded independently, allowing the use of \ac{MWPM} decoding.
Another example of modifying the surface structure is described in \cite{Hig:23}, where all $\M{Z}$ generators are replaced with $\M{Y}$ generators. 
Due to these adjustments, the resulting codes are no longer CSS. Additionally,
belief-propagation is needed with MWPM to better exploit all information in circuit-level noise models
considered \cite{Hig:23}. 

In this letter, we propose new quantum codes, named ZZZY surface codes, in which a few Pauli $\M{Y}$ measurements are incorporated at carefully selected locations within the lattice. This approach aims to improve code performance over asymmetric channels while considering decoder complexity. 
To preserve the use of the \ac{MWPM} decoder, while admitting an additional low-complexity pre-processing phase, we choose to insert at most one $\M{Y}$ measurement per plaquette. The resulting ZZZY surface code shows a marked improvement in the correction of error patterns consisting of $\M{Z}$ operators.
Throughout the paper, we will delve into the details of the decoder, providing illustrative examples to support its description.


\section{Preliminaries and Background}
\label{sec:preliminary}


We indicate as $[[n,k,d]]$ a \ac{QECC} with a minimum distance of $d$, encoding $k$ logical qubits into a codeword of $n$ data qubits. 
Having a distance $d$ allows to correct all error patterns of weight up to $t = \lfloor(d-1)/2 \rfloor$.
The Pauli operators are denoted as $\PauliX, \PauliY$, and $\PauliZ$. 
Employing the stabilizer formalism, each code is characterized by $n-k$ independent and commuting operators $\M{G}_i \in \mathcal{G}_n$, termed stabilizer generators or simply generators, with $\mathcal{G}_n$ being the Pauli group on $n$ qubits \cite{Got:09}. The codewords are stabilized by the generators. 
The generators define measurements on quantum codewords without disturbing the original quantum state, obtained through the use of ancillary qubits. 
For instance, if we have a generator $\M{G}_i = \M{Y}_1\M{Z}_4\M{Z}_6$, it means that its associated ancilla qubit $A_i$ has to perform a $\M{Y}$ measurement on qubit 1 and $\M{Z}$  measurements on qubit 4 and 6.
Measuring the ancilla $A_i$, the output is 0 if the operator acting on the codeword state commutes with $\M{Y}_1\M{Z}_4\M{Z}_6$, and 1 if it anti-commutes. 
If any ancilla $A_i$ returns 1, the decoder detects the occurrence of an error operator, and intervenes to find an operator capable of correcting it, ultimately restoring a codeword state wherein all ancillas return 0. 


Among stabilizer codes we focus on surface codes. These have qubits arranged on a plane and require only local interaction between qubits \cite{BraKit:98, DenKitLan:02, wan:03, Rof:19, HorFowDev:12}.
Logical operators can be easily identified on surface codes: $\M{Z}_L$ ($\M{X}_L$) operator consists of a tensor product of $\M{Z}$'s ($\M{X}$'s) crossing horizontally (vertically) the lattice. 
An important feature of surface codes is that they can be decoded with the \ac{MWPM} algorithm \cite{Higg:22}. 
This decoder builds a graph where vertices correspond to error ancillas, and edges are weighted according to the number of qubits between them. 
Finally, matching these ancillas in pairs the \ac{MWPM} can localize the errors.

To analyze quantum codes, it is common to assume that errors occur independently and with the same statistics on the individual qubits of each codeword.
Moreover, qubit errors can manifest as Pauli $\M{X}$, $\M{Z}$, or $\M{Y}$, with probabilities $\px$, $\pz$, and $\py$, respectively. 
The overall probability of a generic qubit error is $p = \px + \pz + \py$.
Two possible models are the \emph{depolarizing channel}, where $\px = \pz = \py = p / 3$, and the \emph{phase-flip channel}, characterized by $p = \pz$ with $\px = \py = 0$.
We characterize an asymmetric channel by the asymmetry parameter $A = 2\pz /(p - \pz)$.
By the means of this parametrization, for $A=1$ we have the depolarizing channel, and for $A\to\infty$ we have the phase-flip channel.
In the case of a symmetric code we can approximate the logical error rate for $p \ll 1$ as \cite{ForValChi24:MacW}
\begin{align}
\label{eq:error_probWithBetaApprox}
p_\mathrm{L} 
&\simeq \left(1-\beta_{t+1}\right) \binom{n}{t+1}p^{t+1} \,.
\end{align}
where
\begin{align}
\label{eq:betaGen}
    \beta_j =1- \frac{1}{p^j}\sum_{i = 0}^{j}\binom{j}{i} \, \pz^i \, \sum_{\ell = 0}^{j-i} \binom{j-i}{\ell}\, \px^\ell \, \py^{j-i-\ell} f_j(i,\ell)\,
\end{align}
is the fraction of errors of weight $j$ that the decoder is able to correct, while $f_j(i,\ell)$ is the fraction of errors of weight $j$, with $i$ Pauli $\M{Z}$ and $\ell$ Pauli $\M{X}$ operators, which are not corrected.


\section{Quantum ZZZY Codes} 
\label{sec:logicals}

In this section we propose the ZZZY codes, belonging to the family of topological codes. These are obtained starting from the lattice of a non rotated surface code, by modifying some of the measurements of the generators, as shown in Fig.~\ref{Fig:sur_ZZZY}. We emphasize that these codes are still planar and they require only local connectivity between qubits. Moreover, for the decoding it is possible to employ the \ac{MWPM} algorithm, with the addition of some conditional statements.
 \begin{figure}[t]
 	\centering
    \includegraphics[width = 0.4\textwidth]{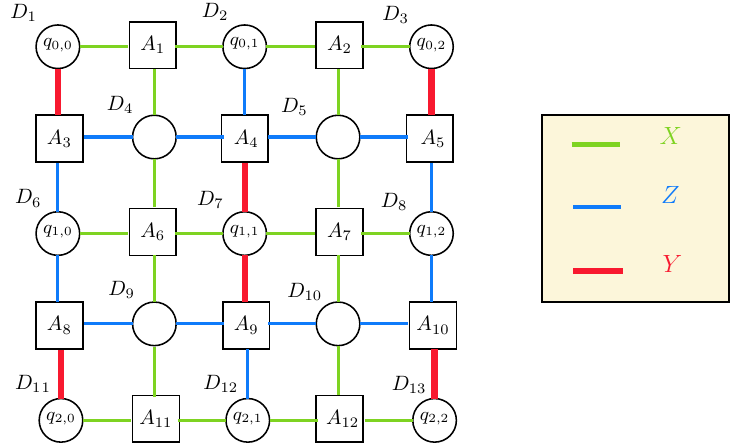}
 	\caption{ $[[13,1,3]]$ ZZZY code.  Circles stand for data qubits $D$, and squares for ancillae $A$. The six edges depicted in red denote a modified $\M{Y}$ measurement with respect the standard surface code. $\M{X}$,  $\M{Z}$, and  $\M{Y}$ measurements are depicted in green, blue, red, respectively. }
 	\label{Fig:sur_ZZZY}
 \end{figure}
 %


In the case of standard squared surface codes, each generator is responsible for only one kind of Pauli error (e.g., $\M{X}$ or $\M{Z}$), since they are composed by either all $\M{X}$ or all $\M{Z}$ operators.
As a result, such codes have a balanced error correction capability and perform best over symmetric channels. 
The basic idea behind ZZZY codes is to sacrifice some $\M{X}$  error correction capability to enhance the performance of the code over channels where phase flip errors are the most probable. 
Hence, we substitute a $\M{Z}$ with a $\M{Y}$ measurement for a subset of generators.
For instance, we design the $[[13,1,3]]$ ZZZY code with the following generators
 \begin{equation*}
 \arraycolsep=2.5pt
 \begin{array}{lllllll} 
 \M{G}_1 = \mathbf{X}_1 \mathbf{X}_2 \mathbf{X}_4 &
 \M{G}_2 = \mathbf{X}_2 \mathbf{X}_3 \mathbf{X}_5 & \\
 \M{G}_3 = \mathbf{Y}_1 \mathbf{Z}_4 \mathbf{Z}_6 &
 \M{G}_4 = \mathbf{Z}_2 \mathbf{Z}_4 \mathbf{Z}_5 \mathbf{Y}_7 &
 \M{G}_5 = \mathbf{Y}_3 \mathbf{Z}_5 \mathbf{Z}_8 \\
 \M{G}_6 = \mathbf{X}_4 \mathbf{X}_6 \mathbf{X}_7 \mathbf{X}_9 &
 \M{G}_7 =  \mathbf{X}_5 \mathbf{X}_7 \mathbf{X}_8 \mathbf{X}_{10} & \\
\M{G}_8 = \mathbf{Z}_6 \mathbf{Z}_9 \mathbf{Y}_{11} &
 \M{G}_9 = \mathbf{Y}_7 \mathbf{Z}_9 \mathbf{Z}_{10} \mathbf{Z}_{12} &
 \M{G}_{10} = \mathbf{Z}_8 \mathbf{Z}_{10} \mathbf{Y}_{13} \\
 \M{G}_{11} = \mathbf{X}_9 \mathbf{X}_{11} \mathbf{X}_{12} &
 \M{G}_{12} = \mathbf{X}_{10} \mathbf{X}_{12} \mathbf{X}_{13} \, &
 \end{array} 
 \end{equation*} 
which are shown in Fig~\ref{Fig:sur_ZZZY}. Hereafter, we will denote these modified generators as $\M{Z}\M{Y}$ generators.
To build larger ZZZY codes, it is sufficient to start from the corresponding $[[n,k,d]]$ surface code, as follows. 
Considering data qubits only on odd rows, let assign two indices $i$ and $j$ to each data qubit in the lattice, where $i,j = 0, \dots, d-1$, denoting the row and column of the respective qubit $q$. Some examples of these labels are depicted in Fig.~\ref{Fig:sur_ZZZY}.
Next, transform the $\M{Z}$ measurements on qubits $q_{2\ell, 0}$ and $q_{2\ell, d-1}$, with $\ell = 0, \dots, d-1$, into $\M{Y}$ measurements.
Finally, convert the $\M{Z}$ measurements on qubits $q_{2\ell + 1, 1}$ and $q_{2\ell + 1, d-2}$ to $\M{Y}$ measurements. 
For the particular case $d=3$, depicted in Fig.~\ref{Fig:sur_ZZZY}, $d-2 = 1$ and then $q_{2\ell + 1, 1}$ and $q_{2\ell + 1, d-2}$ is the same qubit, for each $\ell$. 
This leads to $3(d - 1) = 6$ modifications to $\M{X}$ generators when $d=3$. It is easy to show that, when $d>3$, the procedure leads to $4(d - 1)$ modifications of $\M{X}$ generators.  
Note that the dual construction, where some $\M{X}$ are replaced by $\M{Y}$ to improve the error correction capability of bit flip errors, can be achieved in a similar manner.


In the following, we will examine the logical operators of the $[[13,1,3]]$ ZZZY code to elucidate the advantage it attains in the presence of $\M{Z}$ channel errors.
The number of logical operators of each weight can be computed starting from Mac Williams identities as shown in \cite{ForValChi24:MacW}.
Specifically, for the $[[13,1,3]]$ 
using the approach in \cite{ForValChi24:MacW} we find that the undetectable error weight enumerator polynomial is 
\begin{align}
\label{eq:und_err}
&L(z)  =  6 z^3 + 24 z^4 + 75 z^5 + 240 z^6 + 648 z^7  + 1440 z^8 \notag \\ 
 &   + 2538z^9 + 3216z^{10} + 2634z^{11} + 1224z^{12} + 243z^{13} \, . 
\end{align}
Since this code has distance three, its asymptotic logical error rate depends on the fraction of errors of weight $j = 2$ that it is able to correct. 
In particular, it can be shown that Pauli errors of weight $j = 2$ can cause logical operators of weight $w = 3$ and $w=4$. From \eqref{eq:und_err}, we see that the $[[13,1,3]]$ ZZZY code has six logical operators with $w=3$ and 24 logical operators with $w=4$. 
Since this code is tailored for channels where phase flip errors occur more frequently, we focus on logical operators composed by only $\M{Z}$ Pauli operators. Referring to  Fig~\ref{Fig:sur_ZZZY}, some examples of logical operators with $w = 3$ and $w = 4$ are $\M{Z}_1$$\M{Z}_2\M{Z}_3$ and $\M{Z}_1$$\M{Z}_2\M{Z}_5\M{Z}_8$, respectively. 
The first one can be caused by three error patterns: $\M{Z}_1$$\M{Z}_2$, $\M{Z}_1$$\M{Z}_3$, and $\M{Z}_2$$\M{Z}_3$.
In the case of standard surface codes, where these errors are detected exploiting only information coming from $\M{X}$ generators, whenever one of these patterns occurs, the \ac{MWPM} is not able to recover it.
For instance, if the channel introduces a $\M{Z}_1$$\M{Z}_2$ error, the decoder will apply a $\M{Z}_3$, realizing the correspondent logical operator. 
However, ZZZY codes have additional information coming from $\M{Z}$$\M{Y}$ generators. 
Indeed, in case of a $\M{Z}_1$$\M{Z}_2$ occurs, ancilla qubit $A_3$, which performs $\M{Y}_1$$\M{Z}_4$$\M{Z}_6$ measurements, anticommutes with the error and it is switched on during the error correction. 
A similar reasoning can be done also for logical operators with $w = 4$. 
Specifically, these operators are due to $\binom{4}{2} - 2 = 4$ pattern of errors of weight two: $\M{Z}_1$$\M{Z}_5$, $\M{Z}_2$$\M{Z}_8$, $\M{Z}_1$$\M{Z}_8$, and $\M{Z}_2$$\M{Z}_5$. This is because  $\M{Z}_1$$\M{Z}_2$ causes a logical operator with $w = 3$, while $\M{Z}_5$$\M{Z}_8$ is always corrected.
In particular, the decoding error is due to the fact that the \ac{MWPM} is not able to distinguish between $\M{Z}_1$$\M{Z}_5$ and $\M{Z}_2$$\M{Z}_8$ ($\M{Z}_1$$\M{Z}_8$ and $\M{Z}_2$$\M{Z}_5$) since they give the same syndrome. However, a $\M{Z}_1$$\M{Z}_5$, contrary to $\M{Z}_2$$\M{Z}_8$, would switch on  $A_3$, which can be exploited to identify the correct channel error. 
The same can be said for each of the $\M{Z_L}$ logical operators with $w = 3, 4$. 
Hence, in the $[[13,1,3]]$ ZZZY code, all $\M{Z}$ error patterns of weight $t + 1$ are corrected, except for one: $\M{Z}_6$$\M{Z}_8$, resulting in $\beta_2 = 0.987$. 
This cannot be corrected as it results in the same syndrome as the error $\M{Y}_7$. 

\section{ZZZY Minimum Weight Perfect Matching }\label{sec:NumRes}

\begin{algorithm}[t]
\small
\SetKwInOut{Input}{input}
\SetKwInOut{Output}{output}
\caption{\texttt{ZZZY\_Decoder}}\label{algo:MWPM}
\Input{
$\V{s}$, syndrome \\ $H$, matrix of the generators \\ $n_\mathbf{ZY}$, $n_\mathbf{X}$, number of $\M{Z}$$\M{Y}$ and $\M{X}$generators} 
\Output{$\bm{\hat{e}}$, vector of the estimated channel errors }
\BlankLine
init $\V{q}$ to all ones, vector of the weights associated to each data qubit of the lattice \\
$\V{q} \gets \texttt{update\_weights}(\V{s}, \V{q},  H, n_\mathbf{ZY}, n_\mathbf{X})$ \\
$D \gets \texttt{compute\_distance}(\V{s})$, matrix of the distances between switched on ancilla \\
$\hat{\V{e}} \gets \text{MWPM}_\mathbf{X}(D)$ \\
\ForAll{$i \in \{1,\dots,n_\mathbf{ZY}\}$}{
\ForAll{$j \in \{1,\dots,n\}$}{
\If{$\hat{\V{e}}(j) = 1$}{
     \If{$H(i,j) = 1$ \text{and} $H(i,j+n) = 1$}
    {
     $\V{s}(j) \gets 1 - \V{s}(j)$ 
    }
}
}
}
$\hat{\V{e}} \gets \text{MWPM}_\mathbf{Z}(D)$ 
\end{algorithm} 

\begin{algorithm}[t]
\small
\SetKwInOut{Input}{input}
\SetKwInOut{Output}{output}
\caption{$\texttt{update\_weights}$}
\Input{$\V{s}, \V{q},  H, n_\mathbf{ZY}$, $n_\mathbf{X}$} 
\Output{$\V{q}$}
\BlankLine
\label{algo:update_weights}
\ForAll{$i \in \{1,\dots,n_\mathbf{ZY}\}$}{
 \ForAll{$j \in \{1, \dots, n \}$ }{  
 \If{ $\bm{s}(i) = 1$}{
    \If{$H(i,j) = 1$ \text{and} $H(i,j+n) = 1$}
    {$\V{q}(j) \gets 0.9$
    }
    }
     \If{ $\V{s}(i) = 0$}{
    \If{$H(i,j) = 1$ \text{and} $H(i,j+n) = 1$}
    {$\V{q}(j) \gets 1.1$
    }
    }
}
}
init $\mathcal{A}$ to the empty set\\
\ForAll{$i \in \{1,\dots,n_\mathbf{ZY}\}$}{
\If{ $\V{s}(i) = 1$}{
$\mathcal{A} \gets \mathcal{A} \cup n_i$ \\
\ForAll{$j \in g(h(n_i))$}{
\If {$\V{s(j)} = 1$} {
$\mathcal{A} \gets \mathcal{A}\setminus n_i$ 
}
}
}
}
\ForAll{$i \in \mathcal{A}$}{
 \ForAll{$j \in \{1, \dots, n \}$ }{  
    \If{$H(i,j) = 1$ \text{and} $H(i,j+n) = 1$}
    {$\V{q}(j) \gets -0.1$
    }
}
}
\end{algorithm}

In decoding ZZZY codes, we must adapt the standard \ac{MWPM} algorithm to leverage the insights gained from $\M{Y}$ measurements. Notably, as surface codes fall under the category of \ac{CSS} codes, the decoding process for $\M{Z}$ generators operates independently from that for $\M{X}$ generators \cite{FowMarMar:12}.
Consequently, the \ac{MWPM} can be divided into two phases: \ac{MWPM}$_\mathbf{X}$, focusing solely on $\M{X}$ generators, followed by \ac{MWPM}$_\mathbf{Z}$ for the $\M{Z}$ stabilizers.
As detailed in Section~\ref{sec:logicals}, $\M{Z}\M{Y}$ generators offer insights into certain $\M{Z}$ errors. However, without careful handling, they can erroneously trigger $\M{X}$ error detections. Take, for example, Fig~\ref{Fig:sur_ZZZY}, where a $\M{Z}_1$ error activates ancillas $A_1$ and $A_3$. Neglecting to deactivate ancilla $A_3$ before \ac{MWPM}$_\mathbf{Z}$ would falsely attribute an additional $\M{X}_1$ error. 
To address this, we introduce a preprocessing step to both \ac{MWPM}$_\mathbf{X}$ and \ac{MWPM}$_\mathbf{Z}$. 
The algorithm's complete description utilizes binary representation for the generators (i.e., for the parity check matrix $H$) and the estimated channel error vector $\hat{\V{e}}$. For instance, in a code with $n$ qubits, the matrix $H$ comprises $2n$ columns, with each row representing a generator. The first $n$ columns contain a 1 where the corresponding generator features a $\M{Z}$ or $\M{Y}$ Pauli measurement, while the second $n$ columns contain a 1 if the generators measure $\M{X}$ or $\M{Y}$ \cite{Got:09}.
We also use the first $n_\mathbf{ZY}$ rows to describe the $\M{Z}\M{Y}$ generators. 
The decoder for ZZZY codes is presented as Algorithm~\ref{algo:MWPM} above. Excluding the function \texttt{update\_weights} (to be introduced later), the algorithm ensures the minimum distance for ZZZY surface codes. 
After evaluating the syndrome, the function \texttt{compute\_distance} utilizes Dijkstra's algorithm to find the shortest paths on a graph, where vertices correspond to switched on ancillas and edges' weights are the sums of the underlying qubit weights.
Subsequently, via \ac{MWPM}$_\mathbf{X}$, pairs of $\M{X}$ ancillas are connected, producing the estimated $\M{Z}$ channel errors.
Next, the parity of all ancillas measuring $\M{Y}$ operators on qubits involved in $\M{Z}$ errors is inverted.
Finally, \ac{MWPM}$_\mathbf{Z}$ also allows for finding the $\M{X}$ channel errors. The algorithm corrects, therefore, all patterns of weight up to $t$. 
Further, we would like to correct as much as possible $\M{Z}$ errors of weight $t+1$. To this aim we can 
exploit the information coming from $\M{Z}\M{Y}$ generators. Specifically, after evaluating the syndrome, if some of the $\M{Z}\M{Y}$ generators are activated, we modify the weights of the edges of the \ac{MWPM} graph using the function \texttt{update\_weights}. 
Since we are considering minimum weight decoders, error patterns of weight $t+1$ could trigger only logical operators of weight $d$ and $d+1$ (if, as assumed, $d$ is odd). 
Let us start improving the correction in case of possible logical operators of weight $d+1$. 
We can achieve this if we apply the following procedure: if one of the generators performing a $\M{Y}$ measurement on the $i$-th qubit is activated, the weight $\bm{q}(i)$ of the corresponding edge is modified to a number slightly smaller than one, e.g., $\bm{q}(i) = 0.9$. 
Moreover, if a generator performing a $\M{Y}$ measurement on the $i$-th qubit is switched off, the weight $\bm{q}(i)$ is set to a number slightly larger than one, e.g., $\bm{q}(i) = 1.1$. 
In this way, during the \ac{MWPM}$_\mathbf{X}$, the decoder is pushed to choose paths where the $\M{Z}\M{Y}$ generators are switched on. 
If the $i$-th qubit is actually affected by a $\M{Z}$ Pauli error, this strategy allows the decoder to choose correctly between different paths composed by the same number of edges. 
We elucidate this with an example reported in Fig \ref{Fig:ZZZY_err}a. 
In particular, if $\M{Z}$ errors occur on data qubits $D_6$ and $D_3$, $\M{Z}\M{Y}$ ancilla $A_5$ is switched on. 
If we directly apply \ac{MWPM}$_\mathbf{X}$, the decoder has to choose between three error patterns of the same weight: $\M{Z}_3\M{Z}_6$, $\M{Z}_2\M{Z}_4$, and $\M{Z}_5\M{Z}_7$. This ambiguity could lead to an error with high probability. 
However, with our modification, the weight of qubit $D_3$ is set to $0.9$, guiding \ac{MWPM}$_\mathbf{X}$ to select it for correction.
Let us now focus on logical operators of weight $d$. In this case, if an error pattern with $t+1$ Pauli $\M{Z}$ operators occurs activating a $\M{Z}\M{Y}$ generator, we would like the decoder to select a path composed of a higher number of qubits if certain conditions are met. In doing so, we need to be sure that we are dealing with a potential logical operator of weight $d$. For this reason, if a $\M{Z}\M{Y}$ generator measuring a $\M{Y}$ operator on qubit $i$-th is activated, and there are no $\M{X}$ generators activated in the rows of the lattice adjacent to the one of qubit $i$-th, $\bm{q}(i)$ is set to a small negative number, e.g., $-0.1$, to force its selection. 
To formalize the algorithm, let us define $h(\cdot)$ as a function that takes as input the index of a $\M{Z}\M{Y}$ generator and returns the index $\ell$ of the qubit under $\M{Y}$ measurement. Additionally, we define the function $g(\cdot)$, which takes as input a qubit index and returns a list of $\M{X}$ generator indexes located in the row above and in the row below the input qubit. This function can be implemented efficiently using modulo operations. An example is depicted in Fig \ref{Fig:ZZZY_err}~b. Specifically, $\M{Z}$ errors have occurred on qubits $D_2$ and $D_3$. Applying the function $h(\cdot)$ to $A_5$, we obtain $h(5) = 3$, representing $D_3$. Consequently, $g(3)$ returns $\{ 6, 7\}$.The list has only two elements due to the fact that $D_3$ is on a boundary. Since ancillas $A_6$ and $A_7$ are both deactivated, the weight of qubit data $D_3$, measured by $A_5$, is set to $-0.1$, ensuring the correction of the error. On the other hand, without our \texttt{Update\_weights}, the \ac{MWPM}$_\mathbf{X}$ decoder would apply a $\M{Z}_1$ correction, leading to the logical operator $\M{Z}_1\M{Z}_2\M{Z}_3$.
In the worst case scenario, for each of the $4(d-1)$ $\M{Z}$$\M{Y}$ generators we could perform an assignment based on two conditional statements.
In practice, this can be easily implemented in hardware by means of simple logic gates, resulting in a pre-processing complexity of $O(1)$.

\begin{lemma}
\label{lem:minmatch}
Given an $[[n,k,d]]$ ZZZY code, for $d > 3$, the fraction of $\M{Z}$ errors of weight $t+1$ that cannot be corrected by the ZZZY decoder over a phase flip channel is $ d \binom{d - 2}{t + 1} / \binom{n}{t + 1}$
\end{lemma}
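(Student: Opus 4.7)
The plan is to identify exactly which weight-$(t+1)$ $\M{Z}$-error patterns the ZZZY decoder cannot correct, and to count them. I expect these uncorrectable patterns to be precisely the $\binom{d-2}{t+1}$ weight-$(t+1)$ subsets of the $d-2$ non-$\M{Y}$ qubits lying on each of the $d$ main rows, yielding the numerator $d\binom{d-2}{t+1}$.

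First, I would observe that from the \ac{MWPM}$_\mathbf{X}$ perspective the $\M{X}$-generators of the ZZZY code coincide with those of the underlying non-rotated surface code, whose minimum-weight horizontal $\M{Z}$-strings are the $d$ main-row chains of length $d$; call them $L_0,\ldots,L_{d-1}$. A weight-$(t+1)$ error $e$ can be uncorrectable only when some alternative correction $e'$ of weight at most $t+1$ shares its syndrome and $e+e'$ is a logical operator of weight $d$ or $d+1$, which forces $e\subseteq\mathrm{supp}(L)$ for some such $L$. By the construction in Section~\ref{sec:logicals}, for $d>3$ each $L_r$ contains exactly two $\M{Y}$-measured qubits and $d-2$ unmodified ones.

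Second, I would establish uncorrectability of each candidate pattern. Fix a main row $r$ and let $e\subseteq L_r$ be a weight-$(t+1)$ subset of the $d-2$ non-$\M{Y}$ qubits of $L_r$. Since a $\M{Z}\M{Y}$ generator with $\M{Y}$ on qubit $j$ anticommutes with a $\M{Z}$ error exactly when $j\in e$, none of the $\M{Z}\M{Y}$ generators is triggered. The $\texttt{update\_weights}$ routine therefore sets $\V{q}(j)=1.1$ on the two $\M{Y}$ qubits of $L_r$ and leaves the remaining non-$\M{Y}$ qubits at weight $1$. The weighted cost of $e$ is $t+1$, while that of the complement $e'=L_r\setminus e$ is $(t-2)\cdot 1 + 2\cdot 1.1 = t+0.2$. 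Since $t+0.2<t+1$, \ac{MWPM}$_\mathbf{X}$ selects $e'$, which combined with $e$ produces $L_r$ and yields a logical error.

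Third, I would argue that every other weight-$(t+1)$ $\M{Z}$-error is corrected. A pattern not contained in any single main row cannot be completed to a minimum-weight logical; the only remaining failure mechanism is via a weight-$(d+1)$ logical, in which case the $\M{Z}\M{Y}$ syndrome parity disambiguates the two tied weight-$(t+1)$ candidates, exactly as illustrated in Section~\ref{sec:logicals} for $\M{Z}_1\M{Z}_5$ versus $\M{Z}_2\M{Z}_8$. If instead $e\subseteq L_r$ touches a $\M{Y}$ qubit $j$, the corresponding $\M{Z}\M{Y}$ generator is activated, and $\texttt{update\_weights}$ sets $\V{q}(j)=0.9$ or, when the $\M{X}$ generators returned by $g(h(\cdot))$ are all silent, $\V{q}(j)=-0.1$. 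A weight comparison analogous to the previous step shows that in every admissible sub-case the cost of $e$ drops strictly below that of $L_r\setminus e$, so \ac{MWPM}$_\mathbf{X}$ selects $e$. The main obstacle is precisely this case analysis: one must verify, for each occupancy pattern of $\M{Y}$ qubits in $e$ and each location of the $\M{X}$-generator endpoints triggered by the $\M{Z}$-chains of $e$, that the appropriate rule in $\texttt{update\_weights}$ fires and makes $e$ the minimum-weight matching. Dividing the resulting count $d\binom{d-2}{t+1}$ by the total $\binom{n}{t+1}$ gives the claimed fraction.
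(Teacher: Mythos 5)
Your proposal follows essentially the same route as the paper's proof: you identify the uncorrectable patterns as the weight-$(t+1)$ subsets of the $d-2$ non-$\M{Y}$ qubits in each of the $d$ rows (where no $\M{Z}\M{Y}$ generator fires and the complement path is cheaper), and you dismiss the weight-$(d+1)$ logical case by the tie-breaking action of \texttt{update\_weights}, exactly as the paper does. Your explicit cost comparison $t+0.2 < t+1$ is a slightly more concrete rendering of the paper's observation that correction would require the $-0.1$ rule to fire, and the case analysis you flag as the remaining obstacle is asserted rather than carried out in the paper as well, so the two arguments match in both structure and level of detail.
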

\begin{proof}
Over a phase-flip channel, all errors of weight $t+1$ that can cause logical operators of weight $2t+2$ are corrected.
Indeed, the decoder has to choose between two solutions composed of the same number of qubits.
Hence, by modifying the weight of the paths using the function \texttt{update\_weights}, the actual error pattern is always identified.
In case the $t+1$ errors occur on the same row of the lattice, they can cause a logical operator of weight $2t + 1$. 
To correct these errors, it is necessary that at least one of them occurs on a qubit measured by one of the two $\M{Z}\M{Y}$ generators, since the ZZZY decoder has to set the weight of the corresponding qubit to $-0.1$.
Hence, the uncorrected error patterns for each of the $d$ rows are $\binom{d - 2}{t + 1}$.
Finally, the total number of $\M{Z}$ error patterns of weight $t + 1$ is $\binom{n}{t + 1}$.
\end{proof} 
Note that, as the code distance increases, the fraction of errors of weight $t+1$ that cannot be corrected becomes smaller.


%
 \begin{figure}[t]
 	\centering
     \includegraphics[width = 0.4\textwidth]{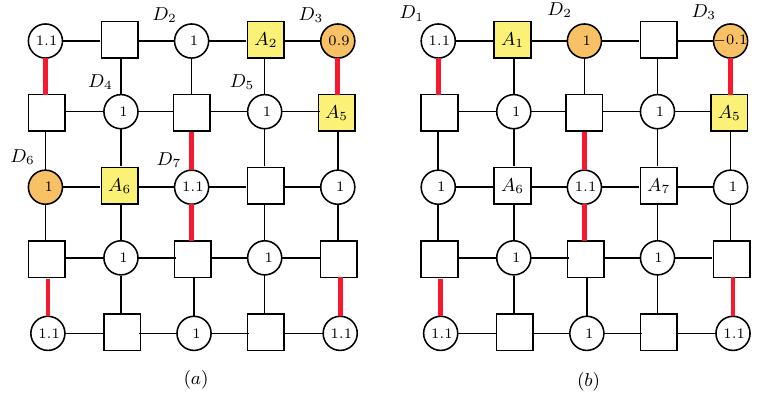}
 	\caption{ Decoding of the $[[13,1,3]]$ ZZZY code. Qubits affected by $\M{Z}$ errors are highlighted in orange. Switched on ancillas are depicted in yellow. Each qubit $i$ is associated with the corresponding weight $\V{q}(i)$ resulting from the function \texttt{Update\_weights}. }
 	\label{Fig:ZZZY_err}
 \end{figure}
 %
 

\begin{table*}[t]
    \centering
    \setlength{\tabcolsep}{1pt}
    \caption{Fraction of non-correctable error patterns $f_j(i,\ell)$.}
    \label{tab:Err}
    \footnotesize
    \begin{tabular}{lC{1.3cm}C{1.3cm}C{1.3cm}C{1.3cm}C{1.3cm}C{1.3cm}C{1.3cm}C{1.3cm}C{1.3cm}C{1.3cm}}
        \toprule
        \rowcolor[gray]{.95}
        \textbf{Code} &  $\M{X}\M{X}$ & $\M{X}\M{Z}$ & $\M{X}\M{Y}$ & $\M{Z}\M{Z}$ & $\M{Z}\M{Y}$ & $\M{Y}\M{Y}$ & & & &\\
        \midrule
        $[[13,1,3]]$ & $0.27$ & $0$ & $0.27$  & $0.27$  & $0.27$ & $0.51$ \\
        $[[13,1,3]]$ XZZX & $0.22$ & $0.051$ & $0.27$  & $0.22$  & $0.27$ & $0.51$ \\
        $[[13,1,3]]$ ZZZY & $0.27$ & $0.013$ & $0.37$  & $0.013$  & $0.28$ & $0.59$ \\
        \midrule 
        \rowcolor[gray]{.95}
        \textbf{Code} & $\M{X}\M{X}\M{X}$ & $\M{X}\M{X}\M{Z}$ & $\M{X}\M{X}\M{Y}$ & $\M{X}\M{Z}\M{Z}$ & $\M{X}\M{Z}\M{Y}$ & $\M{X}\M{Y}\M{Y}$ & $\M{Z}\M{Z}\M{Z}$ & $\M{Z}\M{Z}\M{Y}$ & $\M{Z}\M{Y}\M{Y}$ & $\M{Y}\M{Y}\M{Y}$\\
        \midrule
        $[[41,1,5]]$  & $0.021$ & $0$ & $0.021$ & $0$ & $0$ & $0.021$ & $0.021$ & $0.021$ & $0.021$ & $0.042$ \\
        $[[41,1,5]]$ XZZX & $0.014$ & $0.002$ & $0.016$ & $0.002$ & $0.007$ & $0.021$ & $0.013$ & $0.016$ & $0.021$ & $0.042$ \\
        $[[41,1,5]]$ ZZZY & $0.021$ & $0$ & $0.021$ & $0.001$ & $0.005$ & $0.021$ & $5\cdot 10^{-4}$ & $0.008$ & $0.020$ & $0.047$ \\
\bottomrule

    \end{tabular}
\end{table*}

\section{Numerical Results}\label{sec:NumRes}

In this section we numerically evaluate the performance of ZZZY codes with the proposed decoder, providing a comparison with surface and XZZX codes under \ac{MWPM} decoding.  In Tab.~\ref{tab:Err} we report the fraction of non-correctable errors for each error class $f_j(i,\ell)$, evaluated by exhaustive search. 
We observe that, for the ZZZY codes, the values of $f_2(2,0)$ (i.e., the $\M{Z}\M{Z}$ class) and $f_3(3,0)$  (i.e., the $\M{Z}\M{Z}\M{Z}$ class) are the lowest.
This shows that ZZZY codes have the best $\M{Z}$ error correction capability.
Exploiting these tabular values, together with \eqref{eq:error_probWithBetaApprox} and \eqref{eq:betaGen}, we can evaluate the code performance.
To this aim, in Fig.~\ref{Fig:plot_symm} we report the logical error rates of surface and ZZZY codes with $d = 3$ and $d =5$ for a physical error rate $p = 0.001$, varying the channel asymmetry.
We note that ZZZY codes, while exhibiting comparable error correction capabilities with respect to surface codes over a depolarizing channel ($A=1$), show a significant performance advantage as the channel's asymmetry increases ($A>1$). 
For $A<1$ we can just use the dual version of the ZZZY code, which will give the same performance as for $A>1$.
Finally, Fig.~\ref{Fig:plot_phy} shows, for $A=100$, a comparison among the codes when varying the physical error rate. 
We observe that for high physical error rate the advantage of ZZZY codes over XZZX codes diminishes.

\begin{figure}[t]
	\centering
    \includegraphics[width = 0.82\columnwidth]{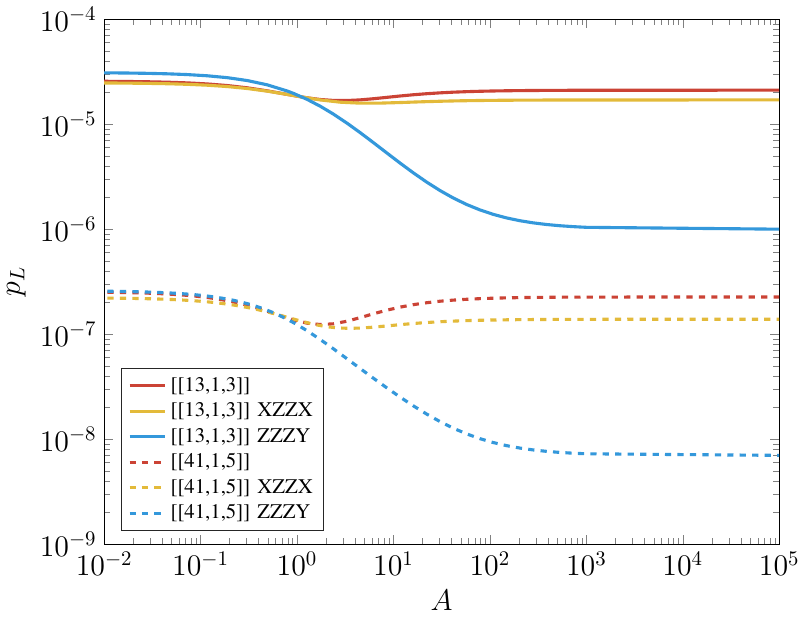}
	\caption{  Logical error probability vs. channel asymmetry for a physical error rate $p = 0.001$. Surface, XZZX, and ZZZY codes with $d = 3$ and $ d = 5$.}
		\label{Fig:plot_symm}
\end{figure}
\begin{figure}[t]
	\centering
    \includegraphics[width = 0.82\columnwidth]{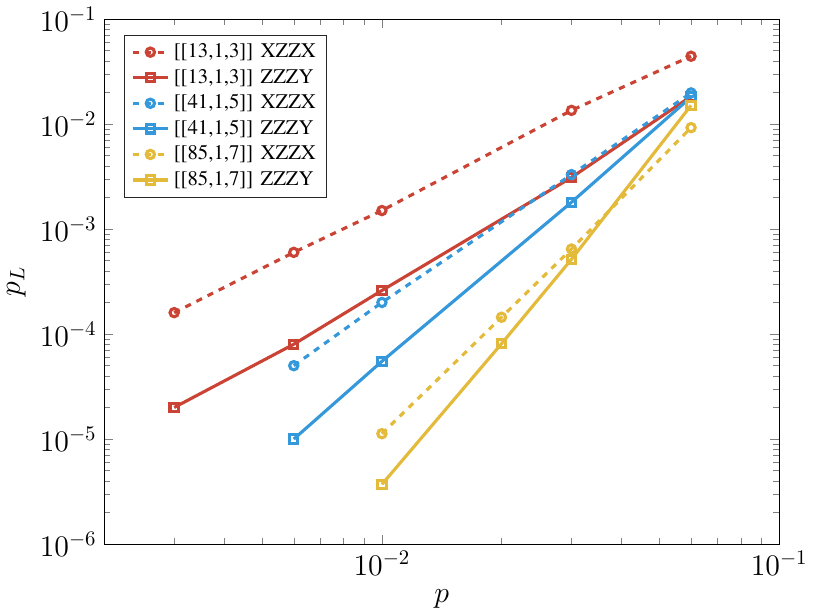}
	\caption{   Logical error probability $p_L$ vs. physical error rate $p$. XZZX, and ZZZY codes with $d = 3$, $ d = 5$, and $ d = 7$. Asymmetry $A = 100$.}	
 \label{Fig:plot_phy}
\end{figure}
%

\section{Conclusions}\label{sec:conclusions}

We have introduced a novel family of \ac{QECC}, specifically designed for asymmetric channels and named ZZZY codes. 
These codes are derived from standard surface codes through the modification of certain generators. Furthermore, we have presented a variant of the \ac{MWPM} decoder, tailored for these codes. 
Remarkably, this decoder effectively leverages the augmented information from the modified generators without adding complexity. 
By employing our variant of the \ac{MWPM} decoder, the ZZZY codes 
exhibit a significant performance advantage compared to surface codes over asymmetric channels.



\bibliographystyle{IEEEtran}
\bibliography{Files/IEEEabrv,Files/StringDefinitions,Files/StringDefinitions2,Files/refs}


\end{document}